\newenvironment{proof}{\paragraph{\bf Proof:}}{\hspace*{\fill}\(\Box\)}
\title{Matrix Representation of Iterative Approximate Byzantine Consensus in Directed Graphs\thanks{This research is supported
in part by
 National
Science Foundation award CNS 1059540 and Army Research Office grant W-911-NF-0710287. Any opinions, findings, and conclusions or recommendations expressed here are those of the authors and do not
necessarily reflect the views of the funding agencies or the U.S. government.}
}
\author{Nitin Vaidya\\ Department of Electrical and Computer Engineering\\ University of Illinois at Urbana-Champaign\\ nhv@illinois.edu}
\newcommand{\comment}[1]{}
\newcommand{\scriptf}{\mathcal{F}}
\newcommand{\scripte}{\mathcal{E}}
\newcommand{\scriptv}{\mathcal{V}}
\newtheorem{theorem}{Theorem}
\newtheorem{claim}{Claim}
\newtheorem{definition}{Definition}
\newtheorem{lemma}{Lemma}
\def\noflash#1{\setbox0=\hbox{#1}\hbox to 1\wd0{\hfill}}
\begin{document}

\date{~}

\maketitle

\centerline{March 8, 2012}

~

\begin{abstract}
This paper presents a proof of correctness of an iterative
approximate Byzantine
consensus (IABC) algorithm for directed graphs. The iterative algorithm allows
fault-free nodes to reach approximate conensus despite the presence of up
to $f$ Byzantine faults. Necessary conditions on the underlying network
graph for the existence of a correct IABC algorithm were shown in our
recent work \cite{IBA_sync,us}. \cite{IBA_sync} also analyzed
a specific IABC algorithm and showed that it performs correctly in any
network graph that satisfies the necessary condition, proving that the
necessary condition is also sufficient. In this paper, we present an 
alternate proof of correctness of the IABC algorithm, using a familiar
technique based on transition matrices \cite{jadbabaie_consensus,Benezit,vaidyaII,Zhang}.\\

The key contribution of this paper is to exploit the
following observation: for a {\em given} evolution of the state vector
corresponding to the state of the fault-free nodes, 
many alternate state transition matrices may be chosen to model
that evolution correctly. For a given state evolution, we identify one approach
to suitably ``design'' the transition matrices so that the standard tools
for proving convergence can be applied to the Byzantine fault-tolerant
algorithm as well.
In particular, the transition matrix for each iteration
is designed such that each row of the matrix contains a large enough number
of elements that are bounded away from 0. \\

\end{abstract}


\newcommand{\deltaC}{\delta_{N_i^*[t]}}
\newcommand{\bfA}{{\bf A}}
\newcommand{\bfB}{{\bf B}}
\newcommand{\bfC}{{\bf C}}
\newcommand{\bfG}{{\bf G}}
\newcommand{\bfH}{{\bf H}}
\newcommand{\bfK}{{\bf K}}
\newcommand{\bfM}{{\bf M}}
\newcommand{\bfP}{{\bf P}}
\newcommand{\bfQ}{{\bf Q}}
\newcommand{\bfv}{{\bf v}}
\newcommand{\sH}{\mathcal{H}}
\newcommand{\T}[1]{\overline{#1}}

\section{Introduction}
\label{sec:intro}

Dolev et al. \cite{AA_Dolev_1986} introduced the notion of
{\em approximate Byzantine consensus} by relaxing the requirement
of {\em exact} consensus \cite{AA_nancy}.
The goal in approximate consensus is to allow the fault-free nodes to agree on values that are approximately equal to each other (and {\em not necessarily}
exactly identical). 
In presence of Byzantine faults, while {\em exact} consensus 
is impossible in {\em asynchronous} systems \cite{FLP_one_crash}, approximate
consensus is achievable \cite{AA_Dolev_1986}.
The notion of approximate consensus is of interest in {\em synchronous}
systems as well, since approximate consensus can be achieved using
simple distributed algorithms that do {\em not} require complete knowledge of
the network topology \cite{AA_convergence_markov}.

In this paper, we are interested in iterative algorithms
for achieving approximate Byzantine consensus in synchronous point-to-point
networks that are modeled by arbitrary {\em directed}\, graphs.
The {\em iterative
approximate Byzantine consensus} (IABC) algorithms of interest have
the following properties, which we will soon state more formally:
\begin{itemize}
\item {\em Initial state} of each node is equal to a real-valued
{\em input} provided to that node.
\item {\em Validity} condition: After each iteration of an IABC algorithm, the state of each fault-free node
must remain in the {\em convex hull} of the states of the fault-free nodes
at the end of the {\em previous} iteration.
\item {\em Convergence} condition:
For any $\epsilon>0$, after a sufficiently large number of iterations,
the states of the fault-free nodes are guaranteed to be within $\epsilon$
of each other.
\end{itemize}
Certain IABC algorithms have been shown to satisfy the above properties
in {\em fully
connected} graphs \cite{AA_Dolev_1986,AA_nancy}, and in {\em arbitrary
directed} graphs satisfying a tight necessary condition 
\cite{IBA_sync,us}.
Please refer to \cite{IBA_sync,us} for a summary of the related work.

The main contribution of this paper is to develop an alternate proof of
correctness for a IABC algorithm, which was proved correct
in arbitrary graphs that satisfy a necessary condition developed
in our prior work \cite{IBA_sync}. 
The alternate proof is based on
transition matrices that capture the behavior of the IABC algorithm
executed by the fault-free nodes.  This work is inspired
by, and borrows some matrix analysis tools from, other work that also uses
transition matrices in 
related contexts \cite{jadbabaie_consensus,Benezit,vaidyaII,Zhang}.
This paper exploits the
following observation: for a {\em given} evolution of the state vector
corresponding to the state of the fault-free nodes, 
many alternate state transition matrices may potentially be chosen to emulate
that evolution correctly. For a given state evolution, we identify one approach
to suitably ``design'' the transition matrices so that the standard tools
can be applied to prove convergence of the Byzantine fault-tolerant
algorithm in {\em all networks} that satisfy a necessary condition
(proved in \cite{us}) on the network communication graph.
In particular, the transition matrix for each iteration
is designed such that each row of the matrix contains a large enough number
of elements that are bounded away from 0.

\section{Network and Failure Models}

\paragraph{Network Model:}
The system is assumed to be {\em synchronous}.
The communication network is modeled as a simple {\em directed} graph $G(\scriptv,\scripte)$, where $\scriptv=\{1,\dots,n\}$ is the set of $n$ nodes, and $\scripte$ is the set of directed edges between the nodes in $\scriptv$. 
 Node $i$ can reliably transmit messages to node $j$ if and only if
the directed edge $(i,j)$ is in $\scripte$.
Each node can send messages to itself as well, however,
for convenience, we {\em exclude self-loops} from set $\scripte$.
That is, $(i,i)\not\in\scripte$ for $i\in\scriptv$.
With a slight abuse of terminology, we will use the terms {\em edge}
and {\em link} interchangeably in our presentation.

For each node $i$, let $N_i^-$ be the set of nodes from which $i$ has incoming
edges.
That is, $N_i^- = \{\, j ~|~ (j,i)\in \scripte\, \}$.
Similarly, define $N_i^+$ as the set of nodes to which node $i$
has outgoing edges. That is, $N_i^+ = \{\, j ~|~ (i,j)\in \scripte\, \}$.
Since we exclude self-loops from $\scripte$,
$i\not\in N_i^-$ and $i\not\in N_i^+$. 
However, we note again that each node can indeed send messages to itself.
A necessary condition for correctness of an IABC algorithm for $f>0$ is that
$|N_i^-|>2f$ \cite{IBA_sync}.

Node $j$ is said to be an {\em incoming neighbor} of node $i$,
if $j\in N_i^-$. Similarly, $j$ is said to be an {\em outgoing neighbor}
of node $i$, if $j\in N_i^+$.


\paragraph{Failure Model:}
We consider the Byzantine failure model, with up to $f$ nodes becoming faulty. A faulty node may {\em misbehave} arbitrarily. Possible misbehavior includes sending incorrect and mismatching (or inconsistent) messages to different neighbors. The faulty nodes may potentially collaborate with each other. Moreover, the faulty nodes are assumed to have a complete knowledge of the execution of
the algorithm, including the states of all the nodes,
contents of messages the other nodes send to each other,
the algorithm specification, and the network topology.

\section{Iterative Approximate Byzantine Consensus (IABC)}
\label{sec:iabc}

Each node $i$ maintains state $v_i$, with $v_i[t]$ denoting the state
of node $i$ at the {\em end}\, of the $t$-th iteration of the algorithm.
Initial state of node $i$,
$v_i[0]$, is equal to the initial {\em input}\, provided to node $i$.
At the {\em start} of the $t$-th iteration ($t>0$), the state of
node $i$ is $v_i[t-1]$.

Let $\scriptf$ denote the set of faulty nodes.
Thus, the nodes
in $\scriptv-\scriptf$ are non-faulty.\footnote{\normalsize For sets $X$ and $Y$, $X-Y$ contains elements that are in $X$ but not in $Y$. That is, $X-Y=\{i~|~ i\in X,~i\not\in Y\}$.} 
\begin{itemize}

\item $U[t] = \max_{i\in\scriptv-\scriptf}\,v_i[t]$. $U[t]$ is the largest state among the fault-free nodes at the end of the $t$-th iteration.
Since the initial state of each node is equal to its input,
$U[0]$ is equal to the maximum value of the initial input at the fault-free nodes.

\item $\mu[t] = \min_{i\in\scriptv-\scriptf}\,v_i[t]$. $\mu[t]$ is the smallest state among the fault-free nodes at the end of the $t$-th iteration.
$\mu[0]$ is equal to the minimum value of the initial input at the
fault-free nodes.
\end{itemize}
The following conditions must be satisfied by an IABC algorithm
in presence of up to $f$ Byzantine faulty nodes:
\begin{itemize}
\item {\em Validity:} $\forall t>0,
~~\mu[t]\ge \mu[t-1]
~\mbox{~~and~~}~
~U[t]\le U[t-1]$

\item {\em Convergence:} $\lim_{\,t\rightarrow\infty} ~ U[t]-\mu[t] = 0$.
Equivalently, $\lim_{\,t\rightarrow\infty} ~ v_i[t]-v_j[t] = 0$, for
$i,j\in \scriptv-\scriptf$.
\end{itemize}

An iterative algorithm is said to be {\em correct} if it satisfies
the {\em validity} and {\em convergence} conditions.
We will prove the correctness of Algorithm 1 below
 in all graphs that satisfy the
necessary condition in Theorem 2 of \cite{us}. The algorithm
should be performed by each node $i$ in the
$t$-th iteration, $t\geq 1$. The faulty nodes may deviate from the
algorithm specification. If a fault-free node does not receive an
expected message from an incoming neighbor (in the {\em Receive step}
below), then that message is assumed to have some default value.

\vspace*{8pt}\hrule

{\bf Algorithm 1}
\vspace*{4pt}\hrule

~

Steps to be performed by node $i$ in the $t$-th iteration:
\begin{enumerate}

\item {\em Transmit step:} Transmit current state $v_i[t-1]$ on all outgoing edges.
\item {\em Receive step:} Receive values on all incoming edges. These values form
vector $r_i[t]$ of size $|N_i^-|$.


\item {\em Update step:}
Sort the values in $r_i[t]$ in an increasing order, and eliminate
the smallest $f$ values, and the largest $f$ values (breaking ties
arbitrarily).
 Let $N_i^*[t]$ denote the identifiers of nodes from
whom the remaining $|N_i^-| - 2f$ values were received, and let
$w_j$ denote the value received from node $j\in N_i^*[t]$.

For convenience, define $w_i=v_i[t-1]$.

Observe that
if $j\in \{i\}\cup N_i^*[t]$ is fault-free, then $w_j=v_j[t-1]$.

Define
\begin{eqnarray}
v_i[t] ~ = ~\sum_{j\in \{i\}\cup N_i^*[t]} a_i \, w_j
\label{e_Z}
\end{eqnarray}
where
\[ a_i ~=~ \frac{1}{|N_i^-|-2f+1} ~=~
 \frac{1}{|N_i^*[t]|+1}
\] 
Recall that
 $i\not\in N_i^*[t]$
because $(i,i)\not\in\scripte$.
The ``weight'' of each term on the right-hand side of
(\ref{e_Z}) is $a_i$, and these weights add to 1.

Observe that $0<a_i\leq 1$.

For future reference, let us define $\alpha$ as:
\begin{eqnarray}
\alpha = \min_{i\in \scriptv}~a_i
\label{e_alpha}
\end{eqnarray}
Note that $0<\alpha\leq 1$.
Specifically, $\alpha$ is a positive constant
that is dependent only on $f$ and the graph $G(\scriptv,\scripte)$.

\end{enumerate}

~
\hrule

~

~

Similar algorithms have been proven to work correctly in
{\em fully connected} graphs \cite{AA_Dolev_1986,IBA_sync}
and {\em arbitrary directed} graphs
satisfying the necessary condition stated in
\cite{IBA_sync}.
In this paper, we provide an alternate proof of correctness
in such arbitrary graphs, using an alternate form of the
necessary condition \cite{us}.

\comment{
\section{Related Work}
\label{sec:related}

Some of the related work has already been discussed earlier in the paper.
In this section, we discuss other related work.

There have been previous attempts at achieving approximate
consensus iteratively in {\em partially} connected graphs. Kieckhafer and Azadmanesh
examined the necessary conditions in order to achieve ``local'' convergence in
synchronous \cite{AA_PCN_Local} and asynchronous \cite{AA_async_PCN} systems.
\cite{AA_PFCN} presents a specific class of networks in which convergence
condition can be satisfied using iterative algorithms.

Zhang and Sundaram \cite{Zhang}
consider a
{\em restricted} fault model in which the faulty nodes are restricted
to sending identical messages to their neighbors. In contrast, our
Byzantine fault model allows a faulty node to send different messages to different
neighbors.
In particular,
 under the {\em restricted} model,
Zhang and Sundaram \cite{Zhang} develop {\em sufficient}\,
conditions for iterative consensus algorithm assuming a ``local" fault
model (in their ``local'' model, a bounded number of each node's neighbors
may be faulty). LeBlanc and Koutsoukos \cite{Leblanc_HSCC_1} address a continuous
time version of the Byzantine consensus problem in {\em complete} graphs.
Under the above {\em restricted} fault model, as well as our fault model, LeBlanc and Koutsoukos \cite{Leblanc_HSCC_2}
have identified some sufficient conditions under which iterative consensus can be
achieved; however, these sufficient conditions are {\em not} tight.
For the {\em restricted} model, recently
LeBlanc et al. \cite{diss_Sundaram} have obtained tight necessary
and sufficient conditions; but these conditions are not adequate
for our Byzantine fault model.

Iterative approximate consensus algorithms that {\em do not} tolerate faulty behavior
have also been studied extensively (e.g., \cite{jadbabaie_consensus, AA_convergence_markov}).

}

\section{Matrix Preliminaries}

We use boldface upper case letters to denote matrices,
rows of matrices, and their elements. For instance,
$\bfH$ denotes a matrix, $\bfH_i$ denotes the $i$-th row of
matrix $\bfH$, and $\bfH_{ij}$ denotes the element at the
intersection of the $i$-th row and the $j$-th column
of matrix $\bfH$.

\begin{definition}
\label{d_stochastic}
A vector is said to be {\em stochastic} if all the elements
of the vector are {\em non-negative}, and the elements add up to 1.
A matrix is said to be row stochastic if each row of the matrix is a
stochastic vector.
\end{definition}

For a row stochastic matrix $\bfA$,
 coefficients of ergodicity $\delta(\bfA)$ and $\lambda(\bfA)$ are defined as
\cite{Wolfowitz}:
\begin{align}
\delta(\bfA) & :=   \max_j ~ \max_{i_1,i_2}~ | \bfA_{i_1\,j}-\bfA_{i_2\,j} |, \label{e_delta} \\
\lambda(\bfA) & :=  1 - \min_{i_1,i_2} \sum_j \min(\bfA_{i_1\,j} ~, \bfA_{i_2\,j}). \label{e_lambda}
\end{align}
It  is easy to see that  $0\leq \delta(\bfA) \leq 1$ and $0\leq \lambda(\bfA) \leq 1$, and that the rows are all identical if and only if $\delta(\bfA)=0$. Additionally, $\lambda(\bfA) = 0$ if and only if $\delta(\bfA) = 0$.

The next result from \cite{Hajnal58} establishes a relation between the coefficient of ergodicity $\delta(\cdot)$ of a product of row stochastic matrices, and the coefficients of ergodicity $\lambda(\cdot)$ of the individual matrices defining the product. 

\begin{claim}
\label{claim_delta}
For any $p$ square row stochastic matrices $\bfQ(1),\bfQ(2),\dots \bfQ(p)$, 
\begin{align}
\delta(\bfQ(1)\bfQ(2)\cdots \bfQ(p)) ~\leq ~
 \Pi_{i=1}^p ~ \lambda(\bfQ(i)). 
\end{align}
\end{claim}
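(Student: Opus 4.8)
The plan is to reduce the claim to a single two-matrix inequality and then iterate. First I would record two elementary facts about the coefficients of ergodicity. Fix a row stochastic matrix $\bfA$ and two of its rows $i_1,i_2$, and write $d_j=\bfA_{i_1 j}-\bfA_{i_2 j}$. The identity $|a-b|=a+b-2\min(a,b)$ together with $\sum_j \bfA_{i_1 j}=\sum_j \bfA_{i_2 j}=1$ gives $\sum_j |d_j| = 2\left(1-\sum_j \min(\bfA_{i_1 j},\bfA_{i_2 j})\right)$, so that $\lambda(\bfA)=\max_{i_1,i_2}\tfrac12\sum_j|d_j|$. Since $\sum_j d_j=0$, the positive and negative parts of $(d_j)$ have equal sum $\tfrac12\sum_j|d_j|$, whence $\max_j|d_j|\le \tfrac12\sum_j|d_j|$; maximizing over $j,i_1,i_2$ yields the base inequality $\delta(\bfA)\le\lambda(\bfA)$.

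Next, the crux: for any two row stochastic $n\times n$ matrices $\bfA,\bfB$ I would prove $\delta(\bfA\bfB)\le \lambda(\bfA)\,\delta(\bfB)$. Fix rows $i_1,i_2$ and a column $k$, and set $m_j=\min(\bfA_{i_1 j},\bfA_{i_2 j})$ and $\sigma=1-\sum_j m_j$. Then $(\bfA\bfB)_{i_1 k}-(\bfA\bfB)_{i_2 k}=\sum_j(\bfA_{i_1 j}-\bfA_{i_2 j})\bfB_{jk}$, and since $\sum_j(\bfA_{i_1 j}-m_j)=\sum_j(\bfA_{i_2 j}-m_j)=\sigma$, the vectors $p_j=(\bfA_{i_1 j}-m_j)/\sigma$ and $q_j=(\bfA_{i_2 j}-m_j)/\sigma$ are stochastic whenever $\sigma>0$ (the case $\sigma=0$ is immediate). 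Hence the difference equals $\sigma\left(\sum_j p_j \bfB_{jk}-\sum_j q_j \bfB_{jk}\right)$, a scaled difference of two convex combinations of the entries of column $k$ of $\bfB$. Each such convex combination lies between the smallest and largest entry of that column, so the bracketed quantity is at most $\max_{j_1,j_2}|\bfB_{j_1 k}-\bfB_{j_2 k}|\le\delta(\bfB)$ in absolute value. Using $\sigma\le\lambda(\bfA)$ and maximizing over $k,i_1,i_2$ gives $\delta(\bfA\bfB)\le\lambda(\bfA)\,\delta(\bfB)$.

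Finally I would obtain the claim by induction on $p$, peeling one factor at a time. Since a product of row stochastic matrices is again row stochastic, every $\delta(\cdot)$ and $\lambda(\cdot)$ appearing below is well defined. Applying the two-matrix inequality with $\bfA=\bfQ(1)$ and $\bfB=\bfQ(2)\cdots\bfQ(p)$ and iterating gives $\delta(\bfQ(1)\cdots\bfQ(p))\le \lambda(\bfQ(1))\cdots\lambda(\bfQ(p-1))\,\delta(\bfQ(p))$; bounding the last factor by the base inequality $\delta(\bfQ(p))\le\lambda(\bfQ(p))$ yields $\delta(\bfQ(1)\cdots\bfQ(p))\le\prod_{i=1}^p\lambda(\bfQ(i))$, as desired.

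The step I expect to be the main obstacle is the two-matrix inequality — specifically the min-decomposition that rewrites the row difference of $\bfA$ as a positive multiple $\sigma$ of a difference of two stochastic vectors, and the observation that a difference of convex combinations of a column's entries is controlled by that column's spread, hence by $\delta(\bfB)$. Everything else is bookkeeping and a short induction.
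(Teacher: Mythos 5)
Your proof is correct. Note, however, that the paper does not actually prove this claim: it simply cites Hajnal (1958), so there is no in-paper argument to compare against. What you have written is a complete, self-contained derivation of that classical result, and it is essentially the standard one: the identity $\sum_j |d_j| = 2\bigl(1-\sum_j\min(\bfA_{i_1 j},\bfA_{i_2 j})\bigr)$ giving $\delta(\bfA)\le\lambda(\bfA)$, the key contraction step $\delta(\bfA\bfB)\le\lambda(\bfA)\,\delta(\bfB)$ via the decomposition of a row difference as $\sigma$ times a difference of two stochastic vectors, and the induction peeling off one factor at a time. All three steps check out, including the degenerate case $\sigma=0$ (there the two rows of $\bfA$ coincide, so the row difference of $\bfA\bfB$ vanishes) and the final application of the base inequality to $\bfQ(p)$. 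Supplying this argument explicitly is arguably an improvement on the paper, which leaves the reader to consult the reference.
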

Claim \ref{claim_delta} is proved in \cite{Hajnal58}. It implies that
if, for all $i$, $\lambda(\bfQ(i))\leq 1-\gamma$ for some $\gamma>0$, then $\delta(\bfQ(1),\bfQ(2)\cdots \bfQ(p))$ will approach zero as $p$ approaches $\infty$.

\begin{definition}
A row stochastic
 matrix $\bfH$ is said to be a {\em scrambling}\, matrix, if $\lambda(\bfH)<1$
{\normalfont \cite{Hajnal58,Wolfowitz}}.
\end{definition}

In a scrambling matrix $\bfH$, since $\lambda(\bfH)<1$, for each pair of
rows $i_1$ and $i_2$, there exists a column $j$ (which may depend on
$i_1$ and $i_2$) such that
 $\bfH_{i_1\,j}>0$ and $\bfH_{i_2\,j}>0$, and vice-versa \cite{Hajnal58,Wolfowitz}.
As a special case, if any one column of a row stochastic matrix $\bfH$
contains only non-zero elements that are lower bounded by some
constant $\gamma>0$, then $\bfH$ must be scrambling, and $\lambda(\bfH)\leq 1-\gamma$. 

\comment{======================
\begin{definition}
\label{d_type}
Two matrices of identical size are said to be of the same ``type'' if
they contain non-zero elements in identical positions.
\end{definition}
Let us denote by $\T{\bfH}$ the {\em type} of matrix $\bfH$.
A partial order can be
defined on the matrix types. Specifically, for matrices $\bfH$ and $\bfK$,
$\T{\bfH}\leq \T{\bfK}$ provided that matrix $\bfK$ is non-zero in each position
where $\bfH$ is non-zero.

\begin{lemma}
\label{l_scambling_1}
For any two row stochastic matrices $\bfH,~\bfK$ of the same size,
if $\T{\bfH}\leq \T{\bfK}$ and $\bfH$ is a scrambling matrix,
then $\bfK$ is a scrambling matrix.
\end{lemma}
\begin{proof}
Follows immediately from the definition of matrix {\em type}
and {\em scrambling} matrices.
\end{proof}

~

\begin{lemma}
\label{l_scrambling_2}
Consider a sequence $\bfH(1),\bfH(2),\cdots,\bfH(t)$ of square row stochastic matrices
with non-zero diagonals. For any subset $N$ of $\{1,2,\cdots,t\}$,
\[
\T{\Pi_{i\in N} \bfH(i)} ~ \leq ~ \T{\Pi_{1\leq i\leq t} \bfH(i)}
\]
\end{lemma}
\begin{proof}
The proof follows from the definition of matrix {\em type}, and the fact
the row stochastic matrices above have non-zero diagonals. 
\end{proof}

~
=================================================}

\section{Matrix Representation of Algorithm 1}
\label{s_claim}

Recall that $\scriptf$ is the set of faulty nodes.
Let $|\scriptf|=\phi$.
Without loss of generality, suppose that nodes 1 through $(n-\phi)$ are
fault-free, and if $\phi>0$, nodes $(n-\phi+1)$ through $n$ are faulty.

Denote by $\bfv[0]$ the column vector consisting of the initial states of
all the {\em fault-free} nodes.
Denote by $\bfv[t]$, where $t\geq 1$, the column vector consistsing of
the states of all the {\em fault-free} nodes
at the end of the $t$-th iteration, $t\geq 1$.
The $i$-th element
of vector $\bfv[t]$ is state $v_i[t]$. The size of the column
vector $\bfv[t]$ is
$(n-\phi)$.


~

\begin{claim}
\label{claim_1}
{
We can express the iterative update of the state
of a fault-free node $i$ $(1\leq i\leq n-\phi)$
performed in (\ref{e_Z}) using the matrix form in (\ref{e_matrix_i})
below,
where $\bfM_i[t]$ satisfies the following four conditions.
\begin{eqnarray}
v_i[t] & = & \bfM_i[t] ~ {\bfv}[t-1]
\label{e_matrix_i}
\end{eqnarray}
}
In addition to $t$, the row vector $\bfM_i[t]$ 
may depend on the state vector $\bfv[t-1]$ as well as the
behavior of the faulty
nodes in $\scriptf$. For simplicity, the notation $\bfM_i[t]$ does not
explicitly represent this dependence. 
\begin{enumerate}
\item $\bfM_i[t]$ is a {\em stochastic} row vector of size $(n-\phi)$.
Thus,
$\bfM_{ij}[t]\geq 0$, for $1\leq j\leq n-\phi$, and
\[
\sum_{1\leq j\leq n-\phi}~\bfM_{ij}[t] ~ = ~ 1
\]

\item $\bfM_{ii}[t]$ equals $a_i$ defined in Algorithm 1. 
Recall that $a_i\geq \alpha$.

\item $\bfM_{ij}[t]$ is non-zero
{\bf only if}  $(j,i)\in\scripte$ or $j=i$.
\item At least $|N_i^-\cap\,(\scriptv-\scriptf)| - f+1$ elements in $\bfM_i[t]$ 
are lower bounded by some constant $\beta>0$, to be defined later
($\beta$ is independent of $i$). 
Note that $N_i^-\cap\,(\scriptv-\scriptf)$ is the set of fault-free
incoming neighbors of node $i$.
\end{enumerate}

\end{claim}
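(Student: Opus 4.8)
The plan is to construct the row vector $\bfM_i[t]$ explicitly from the trimmed-average update (\ref{e_Z}) and then verify the four properties in turn. Writing $\bfv[t-1]$ for the vector of fault-free states, I would start from $v_i[t] = a_i\, w_i + a_i \sum_{j \in N_i^*[t]} w_j$ and replace each received value $w_j$ by an expression in the fault-free states alone. For a fault-free surviving neighbor $j$ we have $w_j = v_j[t-1]$ exactly, so such a term contributes weight $a_i$ to column $j$ directly. All the work is in handling the surviving values that came from faulty neighbors, which need not match any coordinate of $\bfv[t-1]$.

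The key structural observation I would use is that every surviving faulty value is sandwiched between two fault-free values. Concretely, fix a faulty node $g \in N_i^*[t]$. Since $g$ survived the trimming, the $f$ eliminated smallest received values are all $\le w_g$; because $g$ itself is one of the at most $f$ faulty nodes, at most $f-1$ of those $f$ eliminated-low values can be faulty, so at least one comes from a fault-free incoming neighbor $p$ with $v_p[t-1] \le w_g$. Symmetrically there is a fault-free incoming neighbor $q$ with $w_g \le v_q[t-1]$. Hence $w_g = \theta_g\, v_p[t-1] + (1-\theta_g)\, v_q[t-1]$ for some $\theta_g \in [0,1]$, and substituting these convex combinations eliminates all faulty values, leaving $v_i[t]$ as a nonnegative combination of fault-free states supported only on $\{i\}$ and the fault-free incoming neighbors of $i$. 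This defines $\bfM_i[t]$.

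Conditions 1--3 then follow almost mechanically. The weights are nonnegative, and since splitting the mass $a_i w_g$ into $a_i \theta_g$ and $a_i(1-\theta_g)$ preserves total mass, the entries still sum to $1$ (stochasticity); the only term touching column $i$ is $a_i w_i$, because the sandwiching nodes $p,q$ and every surviving fault-free neighbor are incoming neighbors and hence $\ne i$, so $\bfM_{ii}[t] = a_i \ge \alpha$; and every column that receives weight is either $i$ or an incoming neighbor of $i$, which is exactly the support condition. I would fix $\beta$ as a uniform fraction of $\alpha$ and defer its exact value until the counting below forces a choice.

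I expect Condition 4 --- forcing at least $|N_i^- \cap (\scriptv - \scriptf)| - f + 1$ entries to exceed $\beta$ --- to be the main obstacle, and the one place where the freedom to ``design'' the matrix is essential. Counting only the guaranteed-large weights produced above (the weight $a_i$ on node $i$, the weight $a_i$ on each surviving fault-free neighbor, and one dominant bracket endpoint per surviving faulty neighbor) meets the bound exactly when $i$ has the maximal number $f$ of faulty incoming neighbors, but falls short by $f - |N_i^- \cap \scriptf|$ when fewer of $i$'s neighbors are faulty. To close this gap I would redistribute weight from the surviving terms onto the fault-free neighbors that the trimming eliminated, using the sandwiching idea in reverse: every surviving value lies between the extreme eliminated fault-free values, so a controlled amount of mass can be shifted onto those eliminated fault-free neighbors without changing $v_i[t]$, without disturbing the diagonal entry $\bfM_{ii}[t]$ (the reassignment is among neighbors only), and while keeping each reassigned weight above a fixed fraction of $\alpha$. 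The delicate part is to perform this redistribution so that at most $f$ fault-free incoming neighbors are left below the threshold and so that a single constant $\beta$ works uniformly over all state vectors $\bfv[t-1]$ and all admissible behaviors of the faulty nodes; the degenerate cases --- eliminated values tied with surviving ones, or the target $v_i[t]$ sitting at an extreme where several fault-free values coincide and force the balancing ratios apart --- are where I would concentrate the effort.
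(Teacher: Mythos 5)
Your overall route is the same as the paper's: express each surviving value as a convex combination of fault-free values drawn from the eliminated sets $L$ and $S$, which is exactly the sandwiching the paper uses. Your counting is also right --- handling only the faulty survivors leaves you short by $f-\delta$ entries, where $\delta=|N_i^-\cap\scriptf|$, and the fix is indeed to also push mass from the fault-free survivors onto eliminated fault-free neighbors. But that final step, which the paper itself flags as the main contribution of the claim, is exactly the step you leave unexecuted: you say a ``controlled amount of mass can be shifted'' while keeping each reassigned weight above a fixed fraction of $\alpha$, and then concede that making a single $\beta$ work uniformly over all state vectors and all adversary behaviors is ``where I would concentrate the effort.'' As written, there is no argument that the shifted mass lands on \emph{enough distinct} eliminated fault-free neighbors with a uniform lower bound; a naive shift could concentrate on one endpoint, or the balancing ratios could degenerate as you fear. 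So Condition 4 is not established.

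The gap is closable, and the paper's construction shows how; the worry about degenerate cases is a red herring. Fix fault-free subsets $L^*\subseteq L$ and $S^*\subseteq S$, each of size $m=f-\delta+\deltaC$ (the counts $g_L=f-\delta_L$ and $g_S=f-\delta_S$ guarantee these exist), and pair them up as $(l_j,s_j)$, $1\le j\le m$. For every survivor $k$ and every $j$ write $w_k=\lambda_{k,j}v_{l_j}[t-1]+\psi_{k,j}v_{s_j}[t-1]$ with $\lambda_{k,j}+\psi_{k,j}=1$ (ties cause no trouble: if $v_{l_j}=v_{s_j}$ take $\lambda_{k,j}=\psi_{k,j}=1/2$). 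Spread all of a faulty survivor's mass, and \emph{half} of each fault-free survivor's mass, uniformly over the $m$ pairs. Because for each single $k$ and each $j$ at least one of $\lambda_{k,j},\psi_{k,j}$ is $\ge 1/2$, each of the $m$ disjoint pairs contributes at least one column with weight $\ge a_i/(4m)$; together with node $i$ and the $|N_i^*[t]|-\deltaC$ fault-free survivors (each retaining $\ge a_i/2$), this gives $1+|N_i^*[t]|-\deltaC+m=|N_i^-\cap(\scriptv-\scriptf)|-f+1$ distinct non-trivial columns, with $\beta=\alpha/(4f)$ valid uniformly since $m\le f$ (and $\beta=\alpha$ in the degenerate case $m=0$, where all survivors are fault-free). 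No uniformity-over-states issue arises because the bound $\max(\lambda_{k,j},\psi_{k,j})\ge 1/2$ is purely combinatorial. In short: right decomposition, right diagnosis of the shortfall, but the quantitative heart of Condition 4 is asserted rather than proved.
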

\begin{proof}
The proof of this claim is presented in Section \ref{ss_claim_1} below.
The last condition above plays an important role in the proof, and the
main contribution of this paper is to ``design'' $\bfM_i[t]$ to make this
condition true.
\end{proof}

~


By ``stacking'' (\ref{e_matrix_i}) for different
$i$, $1\leq i\leq n-\phi$, we can
represent the state update for all the fault-free nodes together
using (\ref{e_matrix})
below, where $\bfM[t]$ is a $(n-\phi)\times (n-\phi)$ matrix, with its $i$-th row
being equal to $\bfM_i[t]$ in (\ref{e_matrix_i}).
\begin{eqnarray}
{\bfv}[t] & = & \bfM[t] ~ {\bfv}[t-1]
\label{e_matrix}
\end{eqnarray}
The four properties of $\bfM_i[t]$ imply that $\bfM[t]$ is a
row stochastic matrix with a non-zero diagonal.
Also, the $i$-th row of $\bfM[t]$ contains $|N_i^-\cap\,(\scriptv-\scriptf)| - f+1$
elements lower bounded by $\beta$ ($\beta$ will be defined later).
This property of $\bfM[t]$ turns out to be important in proving
convergence of Algorithm 1.

$\bfM[t]$ is said to be a {\em transition matrix}.

By repeated application of (\ref{e_matrix}), we obtain:
\begin{eqnarray*}
\bfv[t] & = & \left(\,\Pi_{i=1}^t \bfM[i]\,\right)\, \bfv[0]
\end{eqnarray*}





\subsection{Correctness of Claim \ref{claim_1}}
\label{ss_claim_1}

Figure \ref{f_sets} illustrates the various sets used here.
Some of the sets in this figure are not yet defined, and will be defined
later in the paper.
\begin{figure}
\centering
\includegraphics[width=0.7\textwidth]{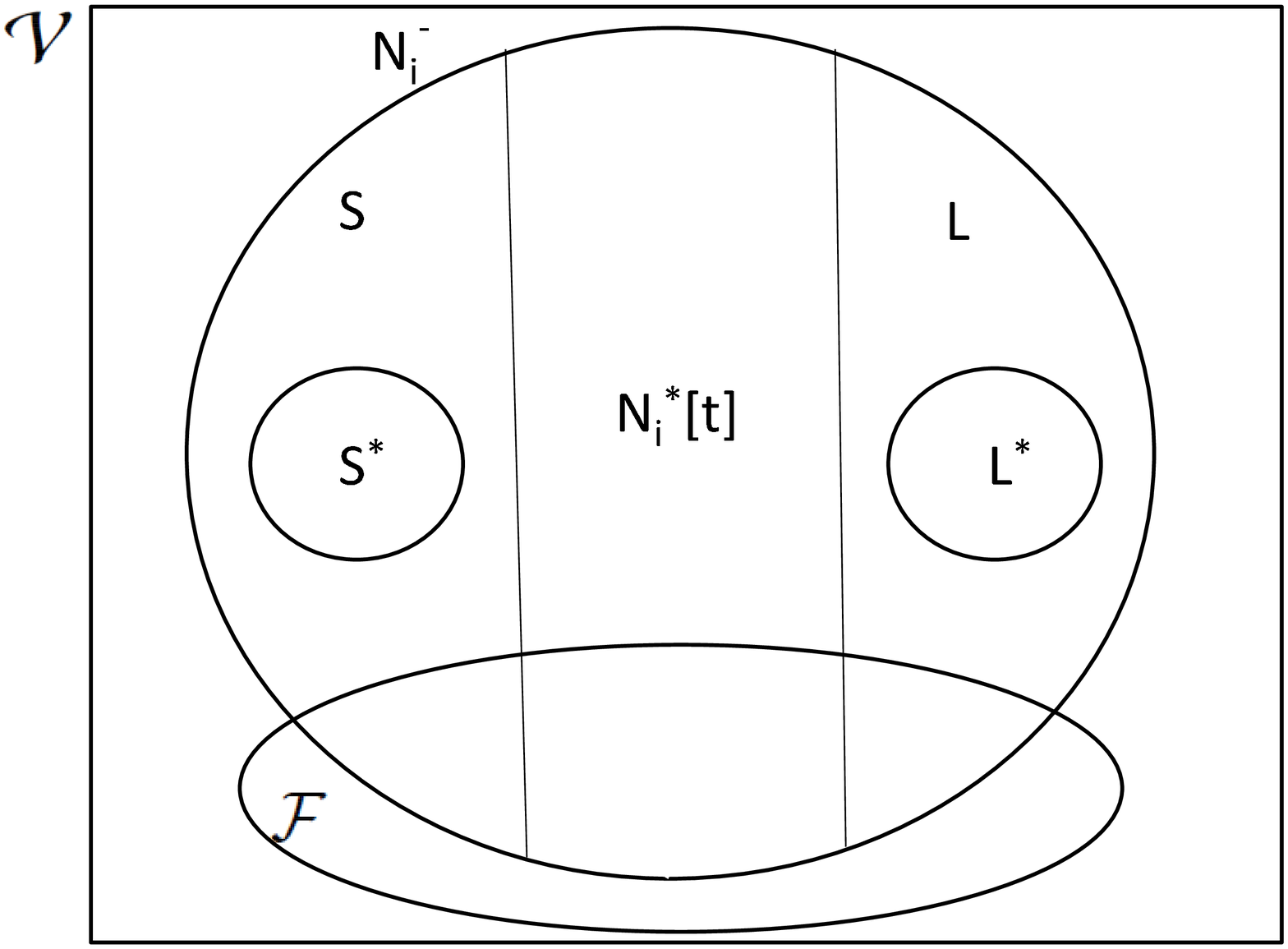}
\caption{Illustration of sets $\scriptv$, $\scriptf$, $N_i^-$,
$N_i^*[t]$, $L^*$ and $S^*$}
\label{f_sets}
\end{figure}

We prove the correctness of Claim \ref{claim_1} by constructing $\bfM_i[t]$
for $1\leq i\leq n-\phi$ that satisfies the conditions in Claim \ref{claim_1}.
Recall that nodes 1 through $n-\phi$ are fault-free, and the remaining
$\phi$ nodes ($\phi\leq f$) are faulty.

Consider a fault-free node $i$ performing the {\em Update step}
in Algorithm 1.
Recall that the largest $f$ and the smallest $f$ values are eliminated
from $r_i[t]$. Let us denote by $L$ and $S$, respectively, the set
of nodes\footnote{Although $L$ and $S$ may be different for each $t$,
for simplicity, we do not explicitly represent this dependence
on $t$ in the notations $L$ and $S$.}  
from whom the largest $f$ values and the smallest $f$ values
were received by node $i$ in iteration $t$.
Thus,
$|L|=|S|=f$, $N_i^*[t] = N_i^- - (L\cup S)$,
and $|N_i^*[t]|=|N_i^--(L\cup S)| = |N_i^-|-2f$.

For any set of nodes $X$ here, let $\delta_X$ and $g_X$ respectively
denote the number of faulty nodes, and the number of fault-free nodes, in set $X$.
For instance, $\delta_L$ and $g_L$ denote, respectively, the number
of faulty and fault-free nodes in set $L$.
Thus,
\[ \delta_L+g_L~=~\delta_S+g_S~=~f\]
Let
\[
\delta = |N_i^-\cap \scriptf|
\]
That is, the number of faulty incoming neighbors of node $i$ is denoted
as $\delta$.
Therefore, $\delta\leq\phi\leq f$,
and
\[
\delta = \delta_L+\delta_S+\deltaC
\]

Then, it follows that
\begin{eqnarray}
g_L & = & f-\delta_L ~= ~ \delta_S+\deltaC+(f-\delta), \mbox{~and}  \label{g_L}\\
g_S & = & f-\delta_S~=~\delta_L+\deltaC+(f-\delta) \label{g_S}
\end{eqnarray}

For fault-free node $i$,
we now define the elements of row $\bfM_i[t]$.
We consider two cases separately: (i) $f-\delta+\deltaC=0$,
and (ii) $f-\delta+\deltaC>0$. 

\subsubsection{$f-\delta+\deltaC=0$}
\label{ss_1}

We know that $(f-\delta) \geq 0$ and $\deltaC\geq 0$.
Therefore, $f-\delta+\deltaC=0$ implies that $f=\delta$
and $\deltaC=0$. Thus, in this case, all the nodes in $N_i^*[t]$
are fault-free.
\begin{itemize}
\item
For each 
$j \in \{i\}\cup N_i^*[t]$, define $\bfM_{ij}[t] = a_i$.
Element $\bfM_{ij}[t]$ corresponds to the term $a_iw_j$ in (\ref{e_Z}).

Recall that $a_i\geq \alpha$, and that each node in $\{i\}\cup N_i^*[t]$
in this case is fault-free.

\item 
For each $j$ such that $j\in\scriptv-\scriptf$
and $j\not\in \{i\}\cup N_i^*[t]$, define $\bfM_{ij}[t] = 0$.
\end{itemize}
Observe that with the above definition of elements of $\bfM_i[t]$,
\[ \bfM_i[t] \bfv[t-1] = \sum_{k\in \{i\}\cup N_i^*[t]} a_iw_k
\]
In the above procedure, we have set $|N_i^*[t]|+1$ elements of $\bfM_i[t]$
equal to $a_i$ (recall that $a_i\geq \alpha$).

Now, because $\delta=f$ and $|N_i^*[t]|=|N_i^-|-2f$, we have
$|N_i^-\cap(\scriptv-\scriptf)|-f+1=|N_i^-|-\delta-f+1
= |N_i^-|-2f+1 = |N_i^*[t]|+1$.
Also, in this case $a_i=1/(|N_i^*[t]|+1)$. Thus, it should
be easy to see that the conditions in Claim \ref{claim_1} are satisfied
by defining $\beta=\alpha$.

~

\subsubsection{$f-\delta+\deltaC>0$}

Since $0\leq \deltaC\leq\delta\leq f$,
$f-\delta+\deltaC>0$ implies that $f>0$.
When $f>0$, the necessary condition in \cite{IBA_sync} implies that
$|N_i^-|\geq 2f+1$. Therefore, the set $N_i^*[t]$ is non-empty.
As per (\ref{e_Z}), each node $k\in N_i^*[t]$ contributes $a_i\, w_k$
to the new state $v_i[t]$ of node $i$. We will define elements
of $\bfM_i[t]$ to account for the contribution of each node $k\in N_i^*[t]$.

Define subsets $L^*$ and $S^*$ such that
$L^*\subseteq L$, $S^*\subseteq S$, $L^*\cap\scriptf=S^*\cap\scriptf=\Phi$,
and $|L^*|=|S^*|=f-\delta+\deltaC$. That is, sets $L^*$ and $S^*$
are subsets of $L$ and $S$, respectively, each of size $f-\delta+\deltaC$,
and containing only fault-free nodes.
Expressions (\ref{g_L}) and (\ref{g_S}) for $g_L$ and $g_S$
imply that such subsets exist.

Let 
\[ L^*=\{l_j~ | ~ 1\leq j\leq f-\delta+\deltaC\}\]
and \[ S^*=\{s_j~ | ~ 1\leq j\leq f-\delta+\deltaC\}.\]
Consider any node $k\in N_i^*[t]$.
For each $j$, $1\leq j\leq f-\delta+\deltaC$,
\[ v_{s_j}[t-1] \leq w_k \leq v_{l_j}[t-1] \]
Therefore,
we can find weights $\lambda_{k,j}\geq 0$ and $\psi_{k,j}\geq 0$
 such that
\[
\lambda_{k,j} +  \psi_{k,j}~=~1\] and
\[
w_k~=~ ~\lambda_{k,j}\, v_{l_j}[t-1] ~ + ~
        \psi_{k,j} \, v_{s_j}[t-1] 
\]
Clearly, at least one of the weights $\lambda_{k,j}$ and $\psi_{k,j}$
must be $\geq 1/2$.
Now, observe that
\begin{eqnarray}
a_i\, w_k & = &  
\frac{a_i}{f-\delta+\deltaC}  ~
\sum_{1\leq j\leq f-\delta+\deltaC} 
\left(
\lambda_{k,j}\, v_{l_j}[t-1] +
        \psi_{k,j} \, v_{s_j}[t-1]
\right)
\label{e_faulty} 
\end{eqnarray}
The above equality is true independent of whether $k$ is fault-free or faulty.
We will later use the above equality for the case when $k$ is a faulty node.
When $k$ is fault-free, \[w_k=v_k[t-1],\] and we can similarly obtain the equality below.
\begin{eqnarray}
a_iw_k & = & \frac{a_i}{2}v_k[t-1]
             ~+~ 
	  \frac{a_i}{2(f-\delta+\deltaC)} ~
\sum_{1\leq j\leq f-\delta+\deltaC} 
\left(
\lambda_{k,j}\, v_{l_j}[t-1]  + 
        \psi_{k,j} \, v_{s_j}[t-1]
\right) \nonumber\\
 \label{e_faultfree}
\end{eqnarray}

We now use (\ref{e_Z}), (\ref{e_faulty}) and (\ref{e_faultfree}) to define elements
of $\bfM_i[t]$ in the following four cases:
\begin{itemize}
\item {\bf Case 1: Node $i$}\\ Define $\bfM_{ii}[t]=a_i$. This is obtained by observing in (\ref{e_Z}) that
the contribution of node $i$ to the new state $v_i[t]$ is $a_iw_i=a_iv_i[t-1]$.
\item {\bf Case 2: Fault-free nodes in $N_i^*[t]$}\\
For each $k\in N_i^*[t]\cap (\scriptv-\scriptf)$,
define $\bfM_{ik}[t] = \frac{a_i}{2}$. This choice is motivated by (\ref{e_faultfree})
wherein the contribution of node $k$ to $a_iw_k$ is $\frac{a_i}{2}v_k[t-1]$.
In Case 2, $|N_i^*[t]\cap(\scriptv-\scriptf)|=|N_i^-|-\delta$ elements
of $\bfM_i[t]$ are defined.
\item {\bf Case 3: Nodes in $L^*$ and $S^*$}\\ For $1\leq j\leq f-\delta+\deltaC$, consider $l_j\in L^*$. In this
case, 
\begin{eqnarray*}
\bfM_{il_j}[t] &=&
\sum_{k\in N_i^*[t] \cap \scriptf} \frac{a_i}{f-\delta+\deltaC} \lambda_{k,j}
~+~
\sum_{k\in N_i^*[t]\cap (\scriptv-\scriptf)}
\frac{a_i}{2(f-\delta+\deltaC)} \lambda_{k,j}
\end{eqnarray*}
Similarly, for $1\leq j\leq f-\delta+\deltaC$, consider $s_j\in S^*$. In this
case, 
\begin{eqnarray*}
\bfM_{is_j}[t] &=&
\sum_{k\in N_i^*[t] \cap \scriptf} \frac{a_i}{f-\delta+\deltaC} \psi_{k,j}
~+~
\sum_{k\in N_i^*[t]\cap (\scriptv-\scriptf)}
\frac{a_i}{2(f-\delta+\deltaC)} \psi_{k,j}
\end{eqnarray*}
These expressions are obtained by summing (\ref{e_faulty})
and (\ref{e_faultfree}), respectively, over the faulty and fault-free nodes
in $N_i^*[t]$, and then identifying the contribution
of each node in $L^*$ and $S^*$ to this sum.
Recall the earlier observation that at least one of $\lambda_{k,j}$
and $\psi_{k,j}$ must be $\geq 1/2$ for each pair $k,j$ where $k\in N_i^*[t]$
and $1 \leq j\leq f-\delta+\deltaC$.
Therefore, it follows that
at least $f-\delta+\deltaC$ elements of $\bfM_i[t]$ defined
in Case 3 must be $\geq \frac{a_i}{4(f-\delta+\deltaC)}$.

\item {\bf Case 4: Nodes in $(\scriptv-\scriptf)-(\{i\}\cup N_i^*[t]\cup L^*\cup S^*)$}\\
These fault-free nodes have not yet been considered in Cases 1, 2 and 3.
For each node $k\in (\scriptv-\scriptf)-(\{i\}\cup N_i^*[t]\cup L^*\cup S^*)$,
we assign $\bfM_{ik}[t]=0$. 
\end{itemize}
Observe that above the definition of the elements of $\bfM_i[t]$
ensures that 
\[ \sum_{j\in\{i\}\cup N_i^*[t]}a_iw_j~=~\bfM_i[t]\bfv[t-1]\]
However, the contribution by the
faulty nodes in $N_i^*[t]$ in (\ref{e_Z}) is
now replaced by an
equivalent contribution by the nodes in $L^*$ and $S^*$.

Now let us verify that the four conditions in Claim \ref{claim_1} hold
for the above assignments to the elements of $\bfM_i[t]$.
\begin{enumerate}
\item
Observe that all the elements of $\bfM_i[t]$ are non-negative.
Case 1 specifies just $\bfM_{ii}[t]=a_i$.
The elements of $\bfM_i[t]$ specified in Case 2 
add up to
\[
\frac{a_i}{2} ~ |N_i^*[t]\cap (\scriptv-\scriptf)|
\]
Recall that for each $j$, $1\leq j\leq (f-\delta+\deltaC)$,
$\lambda_{k,j}+\psi_{k,j}=1$ for $k\in N_i^*[t]$.
Therefore, when added over all $k\in N_i^*[t]$
and $1\leq j\leq (f-\delta+\deltaC)$, the elements of $\bfM_i[t]$ specified in Case 3 
add up to
\[
a_i ~ |N_i^*[t]\cap \scriptf| ~+~
\frac{a_i}{2} ~ |N_i^*[t]\cap (\scriptv-\scriptf)|
\]
Therefore, when all the elements of $\bfM_i[t]$ defined in Cases 1,
2 and 3 are added together, we get
\begin{eqnarray*}
a_i ~+~ a_i\, |N_i^*[t]\cap\scriptf|
~+~ 
a_i  |N_i^*[t]\cap (\scriptv-\scriptf)| ~=~
 a_i (|N_i^*[t]|+1) ~=~1
&&
\end{eqnarray*}
because 
$a_i=1/(|N_i^*[t]|+1)$.
Now observe that the elements specified in Cases 1, 2 and 3 are clearly
$\leq 1$.
In the expression for $\bfM_{il_j}[t]$ in Case 3, observe that the two
summations on the right side together contain $|N_i^*[t]|$ terms,
and in these terms, observe that $\lambda_{k,j}\leq 1$,
$f-\delta+\deltaC\geq 1$ and $a_i=\frac{1}{|N_i^*[t]|+1}$. Therefore,
$\bfM_{il_j}[t]<1$.
Similarly, we can show that $\bfM_{is_j}[t]<1$ as well.

Thus, we have shown that $\bfM_i[t]$ is a stochastic vector.

\item $\bfM_{ii}[t]=a_i$ as specified in Case 1.

\item Since $\bfM_{ij}[t]$ is defined to be non-zero only in Cases 1, 2 and 3,
which consider the nodes only in $\{i\}\cup N_i^-$, it follows
that $\bfM_{ij}[t]$ is non-zero {\em only if} $(j,i)\in\scripte$
or $j=i$.

\item
Cases 1 and 2 together set $1+|N_i^*[t]\cap(\scriptv-\scriptf)|=
1+|N_i^*[t]|-\deltaC$
elements of $\bfM_i[t]$ to be $\geq a_i/2$. We observed
earlier that Case 3 results in at least
$f-\delta+\deltaC$ elements of $\bfM_i[t]$ 
being $\geq \frac{a_i}{4(f-\delta+\deltaC)}$.
Also, observe that the elements of $\bfM_i[t]$ specified
in Cases 1 and 2 are distinct from those specified in Case 3,
and that $\frac{a_i}{2} \geq \frac{a_i}{4(f-\delta+\deltaC)}$. Thus, overall, at
least
\begin{eqnarray*}
(1+|N_i^*[t]|-\deltaC) ~+~
f-\delta+\deltaC ~=~ |N_i^*[t]|+f-\delta+1
~=~ |N_i^-|-f-\delta+1 && \\ ~=~ |N_i^-\cap(\scriptv-\scriptf)|-f-1
&& \end{eqnarray*}
elements of $\bfM_i[t]$ are set
$\geq \frac{a_i}{4(f-\delta+\deltaC)}$.
Derivation of the above equation uses the facts that
$|N_i^*[t]|=|N_i^-|-2f$ and
$|N_i^-\cap(\scriptv-\scriptf)|=|N_i^-|-\delta$. Then
by defining $\beta$ as below, condition 4 in Claim \ref{claim_1}
holds true.
\[
\beta = \frac{\alpha}{4(f-\delta+\deltaC)}
\]
\end{enumerate}

Therefore, Claim \ref{claim_1} is proved correct.

~

\subsection{Correspondence Between Sufficiency Condition and $\bf M[t]$}

Let us define set $R_\scriptf$ of subgraphs of $G(\scriptv,\scripte)$ as follows.
\begin{eqnarray}
R_\scriptf & = & \{ H ~ | ~ H \mbox{~ is obtained by removing all the faulty}
\nonumber \\
&& \mbox{ nodes
	 from $\scriptv$ along with their edges, and then} \nonumber
	\\ && \mbox{ removing any additional $f$ incoming edges}\nonumber \\ 
&&
	\mbox{ at each fault-free node} \}
\end{eqnarray}
Thus, $\scriptv-\scriptf$ is the set of nodes in each graph in $R_\scriptf$.

Let $\tau$ denote $|R_\scriptf|$. 
$\tau$ depends on $\scriptf$ and the underlying network, and it is finite.

\begin{claim}
\label{claim_suff}
Suppose that graph $G(\scriptv,\scripte)$ satisfies the necessary condition in 
Theorem 2 in \cite{us}. Then it follows that in each $H\in R_\scriptf$,
there exists at least one node that has directed paths to all the nodes in $H$
(consisting of the edges in $H$).
\end{claim}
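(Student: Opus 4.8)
The plan is to pass to the condensation of a given reduced graph $H\in R_\scriptf$, i.e.\ the directed acyclic graph whose vertices are the strongly connected components (SCCs) of $H$ and whose edges record the edges running between distinct SCCs, and to prove that this DAG has a \emph{unique} source (an SCC with no incoming edge from outside itself). I would first record the elementary equivalence that a finite DAG contains a vertex from which every vertex is reachable \emph{if and only if} it has a unique source. Indeed, two distinct sources are mutually unreachable; and if some vertex were unreachable from the unique source $s$, then the non-empty set of vertices unreachable from $s$ would contain a vertex with no predecessor inside that set, whose predecessors (if any) would necessarily lie outside it and hence be reachable from $s$, forcing that vertex to have no predecessor at all and thus to be a second source, a contradiction. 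Translating back, any node lying in the unique source SCC of $H$ then has directed paths to all nodes of $H$, which is exactly the assertion of Claim~\ref{claim_suff}. It therefore suffices to show that no $H\in R_\scriptf$ has two or more source components.

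I would then argue by contradiction. Suppose some $H\in R_\scriptf$ has two distinct source components; call two of them $L$ and $R$ (each is a non-empty set of fault-free nodes, since the node set of every graph in $R_\scriptf$ is $\scriptv-\scriptf$). Set $C=(\scriptv-\scriptf)-(L\cup R)$ and $F=\scriptf$, so that $L,C,R,F$ partition $\scriptv$ with $|F|=\phi\le f$ and with $L,R$ both non-empty. Applying the necessary condition of Theorem~2 in \cite{us} to this partition yields, in the notation of that theorem, that either $L\cup C\ARightarrow R$ or $R\cup C\ARightarrow L$, where $A\ARightarrow B$ asserts that some node of $B$ has at least $f+1$ incoming neighbors in $A$ in the original graph $G$. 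By symmetry assume the first alternative, so there is a node $r\in R$ with at least $f+1$ incoming neighbors in $G$, all lying in $L\cup C\subseteq\scriptv-\scriptf$ and hence all fault-free.

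The crucial step is to track these edges through the construction of $H$. Forming $H$ first removes the faulty nodes and their edges, which does \emph{not} disturb the $\ge f+1$ edges entering $r$ from $L\cup C$ because those endpoints are fault-free, and then deletes \emph{at most} $f$ further incoming edges at $r$. Since $r$ retained at least $f+1$ incoming edges from $L\cup C$ going into this last deletion, at least one such edge, say from $u\in L\cup C$, survives in $H$. But $u\notin R$, so $u\to r$ is an edge entering the source component $R$ from outside it, contradicting that $R$ has no incoming edges in $H$; the symmetric alternative $R\cup C\ARightarrow L$ produces an edge entering $L$ and is equally impossible. Hence no $H\in R_\scriptf$ can have two source components, which completes the proof. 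I expect the main obstacle to be invoking the condition of \cite{us} in precisely the form the partition $L,C,R,F$ requires, together with the $f{+}1$ versus $f$ counting that guarantees a crossing edge survives in \emph{every} reduced graph; once these are pinned down, the remainder is routine graph manipulation.
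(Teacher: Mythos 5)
Your argument is correct and complete, but it is worth noting that the paper itself does not actually prove Claim~\ref{claim_suff}: its entire ``proof'' is the single sentence that the claim follows from Theorem~2 of \cite{us}. What you have written is, in effect, a reconstruction of the argument that lives in that cited reference. Your two ingredients are exactly the right ones. First, the reduction to the condensation of $H$ and the elementary equivalence (a finite DAG has a vertex reaching everything if and only if it has a unique source component) is carried out correctly, including the slightly delicate converse direction via a minimal unreachable vertex. Second, the contradiction step correctly instantiates the partition-form necessary condition of \cite{us} with $L$ and $R$ taken to be two alleged source components, $C$ the remaining fault-free nodes, and $F=\scriptf$, and the counting is sound: a node of $R$ with at least $f+1$ fault-free incoming neighbors in $L\cup C$ in $G$ keeps all of them after deletion of the faulty nodes and loses at most $f$ of them in forming $H\in R_\scriptf$, so at least one edge into $R$ from outside survives, contradicting that $R$ is a source component. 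The only point to be careful about is that your invocation of Theorem~2 of \cite{us} assumes its hypotheses permit $C$ and $F$ to be empty and require only that $L$ and $R$ be non-empty with $|F|\le f$; this matches the form of the condition in that reference, so the proof goes through. Compared with the paper, your version has the virtue of being self-contained, at the cost of importing the precise statement of the external theorem rather than merely its consequence.
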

\begin{proof}
The proof follows from Theorem 2 of \cite{us}.
\end{proof}

In this discussion, let us denote a graph by an italic upper case letter,
and the corresponding {\em connectivity matrix} using the same letter
in boldface upper case. Thus,
$\bfH$ will denote the connectivity matrix for graph $H\in R_\scriptf$;
$\bfH$ is defined as follows:
(i) for $1\leq i,j\leq n-\phi$,
if there is a directed link from node $j$ to node $i$ in
graph $H$ then $\bfH_{ij}=1$,
and (ii) $\bfH_{ii}=1$ for $1\leq i\leq n-\phi$.
Note that in our notation, the $i$-th row of $\bfH$ (that is, $\bfH_i$)
corresponds to the incoming links at node $i$, and the self-loop
at node $i$.
The connectivity matrix $\bfH$ for any $H\in R_\scriptf$
has a non-zero diagonal.


\begin{lemma}
\label{l_one_column}
For any $H\in R_\scriptf$, $\bfH^{n-\phi}$ has at least one non-zero column.
\end{lemma}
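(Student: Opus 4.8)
The plan is to read the powers of the $0/1$ connectivity matrix $\bfH$ combinatorially and then feed in the ``root'' node supplied by Claim~\ref{claim_suff}. First I would record the standard fact that for the connectivity matrix $\bfH$, the entry $(\bfH^m)_{ij}$ is non-zero if and only if the graph $H$ contains a directed walk of length exactly $m$ from node $j$ to node $i$. This follows by induction on $m$ from the expansion $(\bfH^m)_{ij}=\sum_k (\bfH^{m-1})_{ik}\,\bfH_{kj}$: a non-zero summand requires $\bfH_{kj}\neq 0$, i.e.\ an edge $j\to k$, together with a length-$(m-1)$ walk from $k$ to $i$, which splices into a length-$m$ walk from $j$ to $i$. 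The one point to keep straight here is the transpose-style convention fixed in the definition of $\bfH$, namely $\bfH_{ij}\neq 0 \iff (j,i)\in\scripte$, so that the \emph{column} index of $\bfH^m$ is the start of the walk and the \emph{row} index is its end.

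Next I would use the fact, already noted in the excerpt, that the connectivity matrix of any $H\in R_\scriptf$ has a non-zero diagonal ($\bfH_{ii}=1$, the self-loops). This lets me pad walks: because $H$ has only $n-\phi$ nodes, if $j$ has any directed path to $i$ then it has a simple path of length at most $n-\phi-1$, and inserting ``stay-put'' self-loop steps extends this to a walk of length exactly $n-\phi$. Hence $(\bfH^{\,n-\phi})_{ij}\neq 0$ precisely when $j$ can reach $i$ in $H$ (the case $i=j$ being covered by the all-self-loop walk). Finally I would invoke Claim~\ref{claim_suff}, which guarantees a node $r$ in $H$ with directed paths to \emph{all} nodes of $H$. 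Then $(\bfH^{\,n-\phi})_{ir}\neq 0$ for every row index $i\in\{1,\dots,n-\phi\}$, so column $r$ of $\bfH^{\,n-\phi}$ has all of its entries non-zero, which is the asserted non-zero column.

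I expect no serious obstacle, as the mathematical content is essentially reachability once Claim~\ref{claim_suff} is granted; the only real care needed is twofold. First, I must honor the index convention so that the reaching node $r$ lands on the \emph{column} rather than the row. Second, I should make explicit that ``non-zero column'' is meant in the strong sense of a column all of whose entries are non-zero (as opposed to merely a non-zero vector): the self-loops already force $(\bfH^{\,n-\phi})_{jj}\neq 0$ for every $j$, so every column is a non-zero vector and the content of the lemma lies entirely in exhibiting a column with \emph{no} zero entry. Stating this reading up front is what makes the argument both non-vacuous and aligned with its intended use, namely the earlier observation that a row stochastic matrix with an all-positive column is scrambling.
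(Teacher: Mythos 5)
Your proof is correct and follows essentially the same route as the paper: invoke Claim~\ref{claim_suff} to obtain a root node, note that paths have length at most $n-\phi-1$, and conclude that the root's column of $\bfH^{n-\phi}$ is entirely non-zero. The only difference is that you spell out the walk-counting interpretation of matrix powers and the self-loop padding, which the paper leaves implicit (relegating the ``all entries positive'' reading of non-zero column to a footnote).
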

\begin{proof}
By Claim \ref{claim_suff}, in graph $H$ there exists at least one node, say
node $k$, that has a directed path in $H$ to all the remaining nodes in $H$.
Since the length of the path from $k$ to any other node in $H$ can contain
at most $n-\phi-1$ directed edges,
the $k$-th column of matrix ${\bfH^{n-\phi}}$ will
be non-zero.\footnote{That is, all the elements of the column will be
non-zero (more precisely, positive, since the elements of matrix $\bfH$
are non-negative).
Also, such a non-zero column will exist in $\bfH^{n-\phi-1}$ too.
We use the loose bound of $n-\phi$ to simplify the presentation. }
\end{proof}

~

\begin{definition}
We will say that an element of a matrix is ``non-trivial'' if it is lower
bounded by $\beta$.
\end{definition}

\begin{definition}
For matrices $\bfA$ and $\bfB$ of identical size, and
a scalar $\gamma$, $\bfA\leq \gamma \, \bfB$ provided
that $\bfA_{ij}\leq \gamma\, \bfB_{ij}$ for all $i,j$.
\end{definition}

\begin{lemma}
\label{l_H}
For any $t\geq 1$, there exists a graph $H[t]\in R_\scriptf$ such that
$\beta \, \bfH[t] ~ \leq ~  {\bfM[t]}$.
\end{lemma}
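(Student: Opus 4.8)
The plan is to reduce the matrix inequality to a purely combinatorial ``fitting'' problem and then read off a suitable $H[t]$ directly from the non-trivial entries of $\bfM[t]$. Observe first that the connectivity matrix $\bfH[t]$ is a $0/1$ matrix and every entry of $\bfM[t]$ is non-negative; hence the inequality $\beta\,\bfH[t]\leq\bfM[t]$ holds if and only if, for every position $(i,j)$ at which $\bfH[t]_{ij}=1$, we have $\bfM_{ij}[t]\geq\beta$. In other words, it suffices to exhibit a member $H[t]$ of $R_\scriptf$ whose edge/self-loop pattern is \emph{contained} in the set of ``non-trivial'' entries of $\bfM[t]$ (those lower bounded by $\beta$); the positions where $\bfH[t]_{ij}=0$ impose no constraint, since $\bfM_{ij}[t]\geq 0$ there automatically.

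Next I would localize the argument to each fault-free row $i$. For each such $i$, let $T_i=\{\,j ~|~ \bfM_{ij}[t]\geq\beta\,\}$ be the set of columns carrying non-trivial entries. Conditions~2 and~4 of Claim~\ref{claim_1} give two facts: since $\bfM_{ii}[t]=a_i\geq\alpha\geq\beta$, the diagonal index $i$ lies in $T_i$; and $|T_i|\geq |N_i^-\cap(\scriptv-\scriptf)|-f+1$. Moreover, Condition~3 forces every non-trivial column $j\neq i$ to satisfy $(j,i)\in\scripte$, and since $\bfM[t]$ is indexed only by the $n-\phi$ fault-free nodes, every such $j$ is in fact a fault-free incoming neighbor of $i$. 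Thus $T_i\setminus\{i\}\subseteq N_i^-\cap(\scriptv-\scriptf)$ and $|T_i\setminus\{i\}|\geq |N_i^-\cap(\scriptv-\scriptf)|-f$.

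I would then build $H[t]$ row by row. At each fault-free node $i$, choose a subset $P_i\subseteq T_i\setminus\{i\}$ with $|P_i|=|N_i^-\cap(\scriptv-\scriptf)|-f$; this is possible by the cardinality bound just noted, and the quantity is non-negative because the necessary condition gives $|N_i^-|>2f$, so $|N_i^-\cap(\scriptv-\scriptf)|=|N_i^-|-\delta\geq |N_i^-|-f>f$. Define $H[t]$ to be the subgraph obtained by removing all faulty nodes together with their edges and then, at each fault-free node $i$, retaining exactly the incoming edges coming from $P_i$ (equivalently, discarding the $|N_i^-\cap(\scriptv-\scriptf)|-|P_i|=f$ fault-free incoming edges not in $P_i$). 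Since exactly $f$ incoming edges are removed at each fault-free node, $H[t]\in R_\scriptf$. Its connectivity matrix satisfies $\bfH[t]_{ij}=1$ iff $j\in\{i\}\cup P_i\subseteq T_i$, so every such position has $\bfM_{ij}[t]\geq\beta$, which is precisely what the reduction in the first paragraph requires.

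The only delicate point, and the one I would check most carefully, is the exact bookkeeping that makes the cardinalities line up: a member of $R_\scriptf$ retains precisely $|N_i^-\cap(\scriptv-\scriptf)|-f$ incoming edges plus the self-loop at node $i$, a total of $|N_i^-\cap(\scriptv-\scriptf)|-f+1$ non-zero positions in row $i$ of $\bfH[t]$, and this matches \emph{exactly} the guaranteed number of non-trivial entries in row $\bfM_i[t]$ from Condition~4. Because these two counts agree, the required subset $P_i$ can always be extracted; had Condition~4 guaranteed even one fewer non-trivial entry, the construction could fail. I expect no further obstacle, since the remaining verification is immediate from non-negativity of $\bfM[t]$.
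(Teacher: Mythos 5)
Your proof is correct and follows essentially the same route as the paper's: both reduce the inequality to matching the non-zero pattern of some $H[t]\in R_\scriptf$ against the non-trivial (i.e., $\geq\beta$) entries of $\bfM[t]$, and both rest on the observation that Condition~4 of Claim~\ref{claim_1} supplies at least $|N_i^-\cap(\scriptv-\scriptf)|-f+1$ such entries per row, exactly the number of non-zero positions in a row of $\bfH[t]$. Your write-up merely makes explicit the subset selection ($P_i$) that the paper leaves implicit in ``the lemma follows.''
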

\begin{proof}
Observe that the $i$-th row of the transition matrix $\bfM[t]$ corresponds to the state update
performed at fault-free node $i$. Recall from Claim \ref{claim_1} that
the $\bfM_{ij}$ is non-zero {\bf only if} link $(j,i)\in\scripte$.
Also, by Claim \ref{claim_1},
$\bfM_i[t]$ (i.e., the $i$-th row of $\bfM[t]$)
contains at least $|N_i^-\cap\,(\scriptv-\scriptf)| - f+1$
{\em non-trivial} elements corresponding
to {\bf fault-free} incoming neighbors of node $i$ and itself (i.e., the
diagonal element).

Now observe that, for any subgraph $H\in R_\scriptf$, $i$-th row of 
$\bfH$ contains exactly $|N_i^-\cap\,(\scriptv-\scriptf)| - f+1$ non-zero elements,
including the diagonal element.

Considering the above two observations, and the definition of set $R_\scriptf$,
the lemma follows.
\end{proof}

~

\section{Correctness of Algorithm 1}

The proof below uses techniques also applied in prior work
(e.g., \cite{jadbabaie_consensus,Benezit,vaidyaII,Zhang}),
with some similarities to the arguments used in \cite{vaidyaII,Zhang}.

\begin{lemma}
\label{l_product_H}
In the product below of $\bfH[t]$ matrices for consecutive
$\tau(n-\phi)$ iterations, at least one column is non-zero. 
\[
\Pi_{t=z}^{z+\tau(n-\phi)-1} \, \bfH[t]
\]
\end{lemma}
\begin{proof}
Since the above product consists of $\tau(n-\phi)$ matrices
in $R_\scriptf$,
at least one of the $\tau$ distinct connectivity matrices
in $R_\scriptf$, say matrix $\bfH_*$, will appear in the above
product at least $n-\phi$ times.

Now observe that: (i)
By Lemma \ref{l_one_column}, $\bfH_*^{n-\phi}$ contains a non-zero
column, say the $k$-th column is non-zero,
and (ii) all the $\bfH[t]$ matrices in the product contain a non-zero diagonal.
These two observations together imply that the $k$-th column in the above product 
is non-zero.
\end{proof}

Let us now define a sequence of matrices $\bfQ(i)$ such that
each of these matrices is a product of $\tau(n-\phi)$ of the
$\bfM[t]$ matrices. Specifically,
\[
\bfQ(i) ~=~ \Pi_{t=(i-1)\tau(n-\phi)+1}^{i\tau(n-\phi)} ~ \bfM[t]
\]
Observe that
\begin{eqnarray}
\bfv[k\tau(n-\phi)] & = & \left(\, \Pi_{i=1}^k ~ \bfQ(i) \,\right)~\bfv[0]
\end{eqnarray}

\begin{lemma}
\label{l_Q}
For $i\geq 1$, $\bfQ(i)$ is a scrambling row stochastic matrix,
and $\lambda(\bfQ(i))$ is bounded from above by a constant
smaller than 1.
\end{lemma}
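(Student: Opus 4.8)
The plan is to show that each $\bfQ(i)$ is a scrambling matrix by exhibiting a single column in which every entry is positive (in fact, bounded below by a positive constant), since the excerpt already notes that a row stochastic matrix with one uniformly positive column is scrambling with $\lambda \le 1 - \gamma$. First I would record that $\bfQ(i)$ is a product of $\tau(n-\phi)$ consecutive transition matrices $\bfM[t]$, each of which is row stochastic with non-negative entries; hence $\bfQ(i)$ is row stochastic. The substantive work is the lower bound on the entries of one column.

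The key step is to pass from the $\bfM[t]$ matrices to their associated connectivity matrices $\bfH[t]$. By Lemma \ref{l_H}, for each $t$ in the range defining $\bfQ(i)$ there is a graph $H[t]\in R_\scriptf$ with $\beta\,\bfH[t] \le \bfM[t]$ entrywise. Since both families of matrices have non-negative entries, this domination is preserved under multiplication: I would argue that
\[
\beta^{\,\tau(n-\phi)}\,\Bigl(\Pi_{t}\bfH[t]\Bigr) ~\le~ \Pi_{t}\bfM[t] ~=~ \bfQ(i),
\]
where the product is taken over the $\tau(n-\phi)$ indices $t$ defining $\bfQ(i)$. This reduces the problem to locating a positive column in the product of the connectivity matrices, which is exactly the content of Lemma \ref{l_product_H}: that product of $\tau(n-\phi)$ consecutive $\bfH[t]$ matrices has at least one non-zero column. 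Because the entries of the $\bfH[t]$ are non-negative integers, a non-zero entry in that column is in fact $\ge 1$.

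Combining these, the corresponding column of $\bfQ(i)$ has every entry bounded below by $\beta^{\,\tau(n-\phi)} \cdot 1 > 0$. I would then set $\gamma = \beta^{\,\tau(n-\phi)}$ and invoke the remark following the definition of scrambling matrices: a row stochastic matrix one of whose columns is uniformly bounded below by $\gamma$ is scrambling with $\lambda(\bfQ(i)) \le 1 - \gamma$. Since $\beta$, $\tau$, $n$, and $\phi$ depend only on $f$ and the graph $G(\scriptv,\scripte)$ (not on $i$ or the adversary's choices), this bound is a constant strictly less than $1$, uniform in $i$, giving both conclusions of the lemma.

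The main obstacle — really the only non-routine point — is verifying that the entrywise domination $\beta\,\bfH[t]\le\bfM[t]$ multiplies correctly across the product, i.e. that $\bfA\le\gamma\bfB$ and $\bfA'\le\gamma'\bfB'$ with all matrices non-negative imply $\bfA\bfA'\le\gamma\gamma'\,\bfB\bfB'$. This is elementary but must be stated carefully, since it relies crucially on non-negativity of all factors (which holds: the $\bfM[t]$ are row stochastic and the $\bfH[t]$ are $0/1$ matrices). A secondary point is confirming that the index ranges in Lemma \ref{l_product_H} and in the definition of $\bfQ(i)$ match, so that the non-zero column guaranteed by Lemma \ref{l_product_H} is indeed a column of the same product that dominates $\bfQ(i)$; this is immediate from the definition $\bfQ(i)=\Pi_{t=(i-1)\tau(n-\phi)+1}^{i\tau(n-\phi)}\bfM[t]$, which spans exactly $\tau(n-\phi)$ consecutive iterations.
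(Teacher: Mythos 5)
Your proposal is correct and follows essentially the same route as the paper's proof: row stochasticity from the product structure, the entrywise domination $\beta^{\tau(n-\phi)}\,\Pi_t\bfH[t]\le\bfQ(i)$ obtained from Lemma \ref{l_H}, the non-zero column from Lemma \ref{l_product_H}, and the conclusion $\lambda(\bfQ(i))\le 1-\beta^{\tau(n-\phi)}$ via the remark on uniformly positive columns. Your explicit attention to why entrywise domination is preserved under products of non-negative matrices is a point the paper leaves implicit, but it is the same argument.
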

\begin{proof}

$\bfQ(i)$ is a product of row stochastic matrices ($\bfM[t]$), therefore,
$\bfQ(i)$ is row stochastic.

From Lemma \ref{l_H}, for each $t$,
\[
\beta \, \bfH[t] ~ \leq ~ \bfM[t]
\]
Therefore, 
\[
\beta^{\tau(n-\phi)} ~ \Pi_{t=(i-1)\tau(n-\phi)+1}^{i\tau(n-\phi)} ~ \bfH[t] ~ \leq 
~ \bfQ(i)
\]
By using $z=(i-1)(n-\phi)+1$ in Lemma \ref{l_product_H},
we conclude that the matrix product on the left side
of the above inequality contains a non-zero column. Therefore, $\bfQ(i)$ contains
a non-zero column as well. Therefore, $\bfQ(i)$ is a scrambling matrix.

Observe that $\tau(n-\phi)$ is finite, therefore, $\beta^{\tau(n-\phi)}$
is non-zero. Since the non-zero terms in $\bfH[t]$ matrices are all 1,
the non-zero elements in $\Pi_{t=(i-1)\tau(n-\phi)+1}^{i\tau(n-\phi)} \bfH[t]$
must each be $\geq$ 1. Therefore, there exists a non-zero column in $\bfQ(i)$
with all the elements in the column being $\geq \beta^{\tau(n-\phi)}$.
Therefore $\lambda(\bfQ(i))\leq 1-\beta^{\tau(n-\phi)}$.
\end{proof}

\begin{theorem}
\label{t}
Algorithm 1 satisfies the validity and the convergence conditions.
\end{theorem}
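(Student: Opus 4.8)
The plan is to dispatch \emph{validity} directly from the row stochasticity of $\bfM[t]$, and then obtain \emph{convergence} by combining a monotonicity argument with the geometric decay of the coefficient of ergodicity supplied by Lemma~\ref{l_Q} and Claim~\ref{claim_delta}.

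For validity I would use the stacked form $\bfv[t]=\bfM[t]\,\bfv[t-1]$ together with Claim~\ref{claim_1}, which guarantees that each row $\bfM_i[t]$ is stochastic. Consequently each new state $v_i[t]=\bfM_i[t]\,\bfv[t-1]$ is a convex combination of the fault-free states $\{v_j[t-1]\}_{1\le j\le n-\phi}$, so $\mu[t-1]\le v_i[t]\le U[t-1]$ for every fault-free $i$. Taking the minimum and the maximum over $i$ yields $\mu[t]\ge\mu[t-1]$ and $U[t]\le U[t-1]$, which is exactly validity. The same inequalities show that the \emph{span} $U[t]-\mu[t]$ is nonincreasing in $t$ and bounded below by $0$, a fact I exploit next.

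For convergence, since $U[t]-\mu[t]$ is nonincreasing, it suffices to drive it to $0$ along the subsequence $t_k=k\tau(n-\phi)$, because monotonicity then forces the entire sequence to the same limit. On this subsequence I use $\bfv[t_k]=\bigl(\Pi_{i=1}^{k}\bfQ(i)\bigr)\bfv[0]$. By Lemma~\ref{l_Q} each $\bfQ(i)$ is scrambling with $\lambda(\bfQ(i))\le 1-\beta^{\tau(n-\phi)}<1$, so Claim~\ref{claim_delta} gives
\[
\delta\!\left(\Pi_{i=1}^{k}\bfQ(i)\right)\ \le\ \prod_{i=1}^{k}\lambda(\bfQ(i))\ \le\ \bigl(1-\beta^{\tau(n-\phi)}\bigr)^{k}\ \xrightarrow[k\to\infty]{}\ 0 .
\]
The remaining step is to turn this decay of $\delta$ into decay of the span of $\bfv[t_k]$: writing $\bfP=\Pi_{i=1}^{k}\bfQ(i)$ and centering against $\mu[0]$, for any two fault-free indices $i,r$ one has $|(\bfP\bfv[0])_i-(\bfP\bfv[0])_r|=\bigl|\sum_j(\bfP_{ij}-\bfP_{rj})(v_j[0]-\mu[0])\bigr|\le (n-\phi)\,\delta(\bfP)\,(U[0]-\mu[0])$, whence $U[t_k]-\mu[t_k]\le 2(n-\phi)\,\delta(\bfP)\,(U[0]-\mu[0])\to 0$. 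Thus the span vanishes along the subsequence, and by monotonicity $\lim_{t\to\infty}(U[t]-\mu[t])=0$.

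The conceptual work has already been done upstream — designing $\bfM[t]$ in Claim~\ref{claim_1} and showing that each block product $\bfQ(i)$ is scrambling with a uniform bound on $\lambda$ in Lemma~\ref{l_Q} — so the theorem itself is largely assembly. The two points I would treat carefully are the passage from $\delta\!\left(\Pi\bfQ(i)\right)\to 0$ to the actual span $U[t_k]-\mu[t_k]\to 0$ (the span contracts under the coefficient of ergodicity, but the clean bound appears only after centering the state vector, since a naive entrywise estimate is not by itself enough), and the lifting from the subsequence $t_k$ to all $t$, which relies entirely on the monotonicity of the span established in the validity step. Neither is deep, but both are required so that the limit is over all $t$ rather than merely the subsequence.
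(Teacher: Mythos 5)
Your proof is correct and follows essentially the same route as the paper: validity from row stochasticity of $\bfM[t]$, and convergence from Lemma~\ref{l_Q} together with Claim~\ref{claim_delta} applied to the block products $\bfQ(i)$. The only difference is presentational — the paper bounds $\delta(\Pi_{i=1}^t\bfM[i])$ for all $t$ at once by using $\lambda(\bfM[t])\le 1$ for the leftover factors, whereas you restrict to the subsequence $t_k=k\tau(n-\phi)$ and lift via monotonicity of the span; you also make explicit the centering step from $\delta\to 0$ to $U[t]-\mu[t]\to 0$, which the paper leaves implicit.
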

\begin{proof}

Since $\bfv[t]=\bfM[t]\,v[t-1]$, and $\bfM[t]$ is a row stochastic matrix, it
follows that
Algorithm 1 satisfies the validity condition.

By Claim \ref{claim_delta}, 
\begin{eqnarray}
\lim_{t\rightarrow \infty} \delta(\Pi_{i=1}^t \bfM[t])
& \leq & \lim_{t\rightarrow\infty} \Pi_{i=1}^t \lambda(\bfM[t]) \\ 
& \leq & \lim_{i\rightarrow\infty} \Pi_{i=1}^{\lfloor\frac{t}{\tau(n-\phi)}\rfloor} \lambda(\bfQ(i)) \\
& = & 0 
\end{eqnarray}
The above argument makes use of the facts that
$\lambda(\bfM[t])\leq 1$ and $\lambda(\bfQ(i))\leq (1-\beta^{\tau(n-\phi)})<1$.
Thus, the rows of $\Pi_{i=1}^t \bfM[t]$ become identical in the limit.
This observation, and the fact that $\bfv[t]=(\Pi_{i=1}^t \bfM[i])\bfv[t-1]$ together imply that
the state of the fault-free nodes satisfies the
convergence condition.

Now, the validity and convergence conditions
together imply that 
there exists a positive scalar $c$ such that
\[
\lim_{t\rightarrow\infty}
\bfv[t] ~ = ~ \lim_{t\rightarrow\infty} \left( \Pi_{i=1}^t \bfM[i]) \right)\,
\bfv[0] ~ = ~ c\,{\bf 1}
\] 
where {\bf 1} denotes a column with all its elements being 1.

\end{proof}

\section{Extension of Above Results}
\label{s_extend}

In this paper, we
analyzed IABC Algorithm 1 designed for synchronous systems. Similar
analysis also applies for IABC Algorithm 2 presented
in \cite{us} for asynchronous systems. 

The analysis will also naturally extend to an IABC algorithm for the 
{\em partially synchronous algorithmic} model
presented in \cite{AA_convergence_markov}, which assumes a bounded
delay in propagation of state between neighbors, and a bounded delay between
consecutive state updates at each node in the network.
The generalization of Algorithm 1 to the
{\em partially synchronous algorithmic} model will allow a node $i$,
if performing state update in iteration $t$, to form vector $r_i[t]$
using the most recent known states of its incoming neighbors; these states
of the neighbors may correspond to any of the prior $B$ iterations,
for some bounded $B$.
A similar IABC algorithm can also be used 
in time-varying network topologies (i.e., networks wherein the set
of links available in iteration $t$ varies with $t$);
the above analysis will
then extend to time-varying topologies as well, with the
algorithm performing correctly so long as
the connectivity matrices for the graphs at different $t$ 
jointly satisfy some reasonable properties, as in \cite{jadbabaie_consensus,Benezit,vaidyaII}.


\section{Summary}

We presented a proof of validity and convergence of Algorithm 1 by
expressing the algorithm in the matrix form. The main contribution of
the paper is to express the algorithm in matrix form that allows us to
prove its convergence under certain necessary conditions on the underlying
communication graph. Thus, the proof implies that the necessary conditions
are also sufficient. 
The key to the proof is to ``design'' the transition matrix for each
iteration such that each row of the matrix contains a large enough number
of elements that are bounded away from 0.


\begin{thebibliography}{10}

\bibitem{AA_PFCN}
A.~Azadmanesh and H.~Bajwa.
\newblock Global convergence in partially fully connected networks (pfcn) with
  limited relays.
\newblock In {\em Industrial Electronics Society, 2001. IECON '01. The 27th
  Annual Conference of the IEEE}, volume~3, pages 2022 --2025 vol.3, 2001.

\bibitem{AA_async_PCN}
M.~H. Azadmanesh and R.~Kieckhafer.
\newblock Asynchronous approximate agreement in partially connected networks.
\newblock {\em International Journal of Parallel and Distributed Systems and
  Networks}, 5(1):26--34, 2002.
\newblock http://ahvaz.unomaha.edu/azad/pubs/ijpdsn.asyncpart.pdf

\bibitem{Benezit}
F. Benezit, V. Blondel, P. Thiran, J. Tsitsiklis, and M. Vetterli, “Weighted gossip: Distributed averaging using non-doubly stochastic
matrices,” in Proc. of IEEE International Symposium on Information Theory, June 2010, pp. 1753--1757.

\bibitem{AA_convergence_markov}
D.~P. Bertsekas and J.~N. Tsitsiklis.
\newblock {\em Parallel and Distributed Computation: Numerical Methods}.
\newblock Optimization and Neural Computation Series. Athena Scientific, 1997.

\bibitem{AA_Dolev_1986}
D.~Dolev, N.~A. Lynch, S.~S. Pinter, E.~W. Stark, and W.~E. Weihl.
\newblock Reaching approximate agreement in the presence of faults.
\newblock {\em J. ACM}, 33:499--516, May 1986.


\bibitem{AA_Fekete_aoptimal}
A.~D. Fekete.
\newblock Asymptotically optimal algorithms for approximate agreement.
\newblock In {\em Proceedings of the fifth annual ACM symposium on Principles
  of distributed computing}, PODC '86, pages 73--87, New York, NY, USA, 1986.
  ACM.

\bibitem{FLP_one_crash}
M.~J. Fischer, N.~A. Lynch, and M.~S. Paterson.
\newblock Impossibility of distributed consensus with one faulty process.
\newblock {\em J. ACM}, 32:374--382, April 1985.

\bibitem{Hajnal58}
J. Hajnal, ``Weak ergodicity in non-homogeneous Markov chains,” Proceedings of the Cambridge Philosophical Society, vol. 54, pp.
pp. 233--246, 1958.

\bibitem{jadbabaie_consensus}
A.~Jadbabaie, J.~Lin, and A.~Morse.
\newblock Coordination of groups of mobile autonomous agents using nearest
  neighbor rules.
\newblock {\em Automatic Control, IEEE Transactions on}, 48(6):988--1001,
  june 2003.

\bibitem{AA_PCN_Local}
R.~M. Kieckhafer and M.~H. Azadmanesh.
\newblock Low cost approximate agreement in partially connected networks.
\newblock {\em Journal of Computing and Information}, 3(1):53--85, 1993.


\bibitem{diss_Sundaram}
H.~LeBlanc, H.~Zhang, S.~Sundaram, and X.~Koutsoukos.
\newblock Consensus of multi-agent networks in the presence of adversaries
  using only local information.
\newblock {\em HiCoNs}, 2012.

\bibitem{Leblanc_HSCC_1}
H.~LeBlanc and X.~Koutsoukos.
\newblock Consensus in networked multi-agent systems with adversaries.
\newblock {\em 14th International conference on Hybrid Systems: Computation and
  Control (HSCC)}, 2011.

\bibitem{Leblanc_HSCC_2}
H.~LeBlanc and X.~Koutsoukos.
\newblock Low complexity resilient consensus in networked multi-agent systems
  with adversaries.
\newblock {\em 15th International conference on Hybrid Systems: Computation and
  Control (HSCC)}, 2012.

\bibitem{AA_nancy}
N.~A. Lynch.
\newblock {\em Distributed Algorithms}.
\newblock Morgan Kaufmann, 1996.


\bibitem{IBA_sync}
N.~H. Vaidya, L.~Tseng, and G.~Liang.
\newblock Iterative approximate Byzantine consensus in arbitrary directed
  graphs.
\newblock {\em CoRR}, abs/1201.4183v1 (January 2012), abs/1201.4183v2 (Februar 2012).
Available from {\tt http://arxiv.org}.

\bibitem{us}
N.~H. Vaidya, L.~Tseng, and G.~Liang.
\newblock Iterative approximate Byzantine consensus in arbitrary directed
  graphs -- Part II: Synchronous and asynchronous systems.
\newblock Technical report, University of Illinois at Urbana-Champaign,
  February 2012.
\newblock {\tt http://www.crhc.illinois.edu/wireless/papers/\\approx\_consensus\_II.pdf}

\bibitem{vaidyaII}
N. H. Vaidya, C. N. Hadjicostis, A. D. Dominguez-Garcia.
Distributed Algorithms for Consensus and Coordination in the Presence of Packet-Dropping Communication Links - Part II: Coefficients of Ergodicity Analysis Approach. September 2011. Available from {\tt http://arxiv.org/abs/1109.6392}.

\bibitem{Wolfowitz}
J. Wolfowitz, “Products of indecomposable, aperiodic, stochastic matrices,” Proceedings of the American Mathematical Society,
vol. 14, no. 5, pp. pp. 733--737, 1963.

\bibitem{Zhang}
H. Zhang and S. Sundaram. Robustness of Information Diffusion Algorithms to Locally Bounded Adversaries. In CoRR, volume abs/1110.3843, 2011. {\tt http://arxiv.org/abs/1110.3843}

\end{thebibliography}
\end{document}